\documentclass[letterpaper, 10 pt, conference]{ieeeconf}
\IEEEoverridecommandlockouts
\usepackage{amsmath,amssymb}
\usepackage{mathtools}
\usepackage{bm}

\usepackage{amsthm} 
\usepackage{subfig}
\usepackage{graphicx}
\let\labelindent\relax
\usepackage{enumitem} 

\usepackage{xcolor}

\newtheorem{lemma}{Lemma}
\newtheorem{proposition}{Proposition}
\newtheorem{theorem}{Theorem}
\newtheorem{corollary}{Corollary}
\newtheorem{remark}{Remark}
\newtheorem{assumption}{Assumption}

\usepackage{color,soul}

\title{Nonlinear Density-Driven Optimal Control (D$^2$OC) for Multi-Agent Spatial Coverage via Sequential Convex Programming}
\author{Julian Martinez$^{1}$ and Kooktae Lee$^{1}$
	\thanks{*This work was supported by NSF CAREER Grant CMMI-DCSD-2638508.}
	\thanks{$^{1}$Julian Martinez and Kooktae Lee are with the Department of Mechanical and Aerospace Engineering, Texas Tech University, Lubbock, TX 79409, USA, email: mar85781@ttu.edu, kooktae.lee@ttu.edu.}  
}

\begin{document}

\maketitle

\begin{abstract}
This paper presents a nonlinear extension of Density-Driven Optimal Control (D$^2$OC) for multi-agent spatial coverage with prescribed density distributions. Rather than assigning individual target locations, D$^2$OC drives the collective spatial distribution of agents toward a desired density through a Wasserstein-based objective. We extend this framework to multi-step finite-horizon control for discrete-time control-affine nonlinear systems using sequential convex programming. At each control update, the nonlinear dynamics are locally linearized over the prediction horizon, yielding a strictly convex quadratic program that preserves the Wasserstein barycentric structure while directly incorporating input constraints. We further characterize the effect of constrained control deviations and nonlinear Taylor remainders on the accuracy of the local linear prediction, establishing an explicit finite-horizon error bound and a two-step specialization for receding-horizon implementation. The resulting method retains the decentralized, distribution-driven nature of D$^2$OC while providing a computationally efficient optimization procedure for nonlinear multi-agent systems. Simulations with unicycle and quadrotor teams show coverage performance comparable to nonlinear model predictive control, while substantially reducing computation time. These results demonstrate a tractable and theoretically characterized framework for density-driven spatial coverage under nonlinear dynamics.
\end{abstract}

\section{Introduction}
\label{sec:introduction}

Multi-agent spatial coverage aims to distribute autonomous agents over a spatial domain according to sensing, monitoring, or mission-dependent requirements. Classical coverage control methods, such as the Voronoi-based framework, partition the workspace into agent-specific Voronoi cells and steer agents toward their cell centroids \cite{cortes2004coverage}. While effective for geometric coverage, such methods do not directly address non-uniform agent distributions. This motivates formulating non-uniform coverage as a spatial distribution-matching problem.

Optimal transport (OT) provides a principled framework for comparing and matching spatial distributions through Wasserstein distances \cite{villani2009optimal,chen2021optimaltransport}. This perspective has also been incorporated into multi-agent coverage control, where the objective is formulated directly in terms of the discrepancy between the empirical distribution of agents and a prescribed spatial density, rather than through individually assigned target locations \cite{inoue2021otcc}. Building on this distribution-based viewpoint, Density-Driven Optimal Control (D$^2$OC) formulates non-uniform multi-agent coverage as a decentralized optimal-control problem that drives the empirical agent distribution toward a reference density through local sample selection, weight updates, and neighbor information exchange \cite{seo2025smcs,lee2025connectivity,seo2025tcst,lee2026density}.

The D$^2$OC framework was initially developed for linear time-invariant (LTI) dynamics and later extended to linear time-varying (LTV) dynamics \cite{lee2025connectivity,seo2025tcst}. Its extension to nonlinear systems, however, has so far been limited to a one-step prediction horizon. A multi-step nonlinear formulation must account for nonlinear agent dynamics while preserving the Wasserstein-based coverage objective, decentralized structure, and input constraints. 
NMPC provides a framework for nonlinear dynamics and constraints through repeated finite-horizon optimization \cite{mayne2000cmpc}, while distributed receding-horizon methods address computational and communication requirements in multi-agent systems \cite{dunbar2006distributed}.
Sequential convex programming (SCP), in contrast, locally approximates nonlinear dynamics and solves a sequence of convex subproblems, and has been applied to real-time nonlinear optimal control and successive convexification \cite{trandin2011rtscp,mao2016successive,mao2018tutorial}. This makes SCP a natural candidate for extending D$^2$OC to multi-step nonlinear coverage although the resulting local approximation introduces a discrepancy between predicted and actual trajectories that must be characterized.

In this paper, we extend D$^2$OC to discrete-time control-affine
nonlinear systems using sequentially linearized local models. Each
optimization step yields a strictly convex local Quadratic Program (QP) with input
constraints while retaining the Wasserstein-based coverage objective.
We further characterize constraint-induced control deviations and derive a finite-horizon bound on the resulting
linearization error.

Thus, the main contributions of this paper are summarized as follows:
1) We extend the Wasserstein-based D$^2$OC formulation to discrete-time control-affine nonlinear systems while preserving its decentralized coverage mechanism and distribution-matching objective;
2) We formulate each nonlinear D$^2$OC update as a strictly convex local QP via sequential convex programming, retaining the Wasserstein barycenter structure and actuator constraints while enabling efficient computation;
3) We quantify input-saturation effects through the deviation between constrained and unconstrained QP solutions and incorporate this deviation into the linearization-error analysis;
4) We establish a finite-horizon error bound accounting for control deviations and second-order Taylor remainders, with a receding-horizon specialization for first-input application.

The resulting framework combines the distribution-level objective of D$^2$OC with efficient local optimization and explicit prediction-error characterization for nonlinear multi-agent dynamics.
\section{Problem Setup}
\label{sec:problem_setup}
\noindent \textbf{Notation:} The sets of real numbers are denoted by \(\mathbb{R}\). The sets \(\mathbb{Z}_{>0}\) and \(\mathbb{Z}_{\geq 0} := \mathbb{Z}_{>0} \cup \{0\}\) denote positive and non-negative integers.  
The space \(\mathbb{R}^n\) represents \(n\)-dimensional column vectors. 
The Euclidean norm of a vector \(\mathbf{x}\) is written as \(\|\mathbf{x}\|\), and the transpose of a matrix \(A\) as \(A^{\top}\). The zero matrix \(\mathbf{0}_{m\times n} \in \mathbb{R}^{m\times n}\), the identity matrix \(\mathbf{I}_n \in \mathbb{R}^{n\times n}\), and the all-ones vector \(\mathbf{1}_n \in \mathbb{R}^n\) are denoted with their sizes as subscripts. 
The operator \(\mathrm{diag}([\cdot])\) constructs a block diagonal matrix from its arguments, while \(\operatorname{col}(\cdot)\) vertically concatenates its vector or matrix arguments. The Kronecker and Hadamard (elementwise) products are denoted by \(\otimes\) and \(\odot\), respectively.

\medskip

We consider a network of $n_a$ agents, where each agent is modeled by
a discrete-time control-affine nonlinear system
\begin{equation}
\begin{aligned}
x_{i,k+1}
&= f_i(x_{i,k}) + G_i(x_{i,k})u_{i,k},\\
y_{i,k}
&= C_i x_{i,k},
\end{aligned}
\label{eq:dyn}
\end{equation}
where $x_{i,k}\in\mathbb{R}^n$, $u_{i,k}\in\mathbb{R}^m$, and
$y_{i,k}\in\mathbb{R}^d$ denote the state, control input, and output,
respectively. The symbols
$f_i:\mathbb{R}^n\rightarrow\mathbb{R}^n$ and
$G_i:\mathbb{R}^n\rightarrow\mathbb{R}^{n\times m}$ describe the
nonlinear drift and input dynamics, respectively, and
$C_i\in\mathbb{R}^{d\times n}$ maps the state to the coverage output.
We assume that $f_i$ and $G_i$ are sufficiently smooth over the
region of interest.

The objective is to control the agents so that their empirical spatial
distribution approaches a prescribed reference distribution while
respecting input and communication constraints. We adopt the
Density-Driven Optimal Control (D$^2$OC) framework \cite{seo2025smcs,seo2025tcst,lee2025connectivity,lee2026density} for this purpose.
The original D$^2$OC formulation is based on a Wasserstein-distance
objective and admits a QP formulation for linear
time-invariant dynamics. In this work, this structure is used as the
starting point for extending D$^2$OC to control-affine nonlinear
systems. In particular, we retain the Wasserstein-based coverage
objective while replacing the globally linear prediction model with
a sequentially linearized local model.

For a discrete probability measure, let $\delta_y$ denote the Dirac
measure at $y$. We characterize the desired spatial distribution $\nu$ by a
finite set of sample points
$\{q_j\}_{j=1}^N\subset\mathbb{R}^d$ with associated weights
$\beta_j$, and the agent distribution by the empirical measure $\rho_k$ at time $k$ as follows:
\begin{equation}
\rho_k
:=
\frac{1}{k+1}
\sum_{t=0}^{k}
\left(
\frac{1}{n_a}
\sum_{i=1}^{n_a}\delta_{y_{i,t}}
\right),
\quad
\nu
:=
\sum_{j=1}^{N}\beta_j\delta_{q_j}.
\label{eq:distributions}
\end{equation}
The coverage objective is to reduce the discrepancy between $\rho_k$
and $\nu$, measured by the quadratic Wasserstein distance
\begin{equation}
\mathcal{W}_2^2(\mu,\nu)
=
\min_{\pi_{\ell j}}
\sum_{\ell=1}^{M}\sum_{j=1}^{N}
\pi_{\ell j}\|y_\ell-q_j\|^2,
\label{eq:W-dist}
\end{equation}
subject to the standard transport constraints
\[
\pi_{\ell j}\geq 0,\qquad
\sum_j\pi_{\ell j}=\alpha_\ell,\qquad
\sum_\ell\pi_{\ell j}=\beta_j.
\]

Rather than solving the global Wasserstein problem centrally, the
Density-Driven Optimal Control (D$^2$OC) framework operates through the
following decentralized cycle \cite{seo2025smcs,seo2025tcst,lee2026density}:

\begin{enumerate}[leftmargin=15pt]
    \item \textbf{Local sample selection and control:}
    each agent selects nearby sample points with nonzero remaining
    weights and computes a feasible control input to reduce its local
    Wasserstein cost.

    \item \textbf{Weight update:}
    after executing the control action, the agent decreases the
    remaining weights of the sample points it covers, recording
    its coverage progress.

    \item \textbf{Neighbor information exchange:}
    when agents communicate, they synchronize their sample weights using
    the information received from neighboring agents, avoiding redundant
    coverage.
\end{enumerate}

Thus, the first two stages are decentralized, while communication is
used only to maintain consistency of the coverage information. In this
work, we focus on the optimal-control step in Stage 1 and its extension
to nonlinear agent dynamics.

\section{Formulation of the D$^2$OC Cost Function via Quadratic Programming}
\label{sec:d2oc_qp}

We first recall how the Wasserstein objective is formulated in D$^2$OC.
Unlike the standard optimal transport problem, where the transport
plan $\pi_{\ell j}$ is the optimization variable, D$^2$OC uses the current agent
position to analytically determine the transport plan and 
identify where the agent should move to match the prescribed density.
Over a prediction horizon, these weights evolve with the predicted agent
position.

For agent $i$, let $\mathcal{S}_{i,k+h}$ denote the selected reference
samples at prediction step $k+h$, with corresponding transport weights
$\pi_{j,k+h}$. The local D$^2$OC cost is
\begin{equation}
\sum_{h=1}^{H}\mathcal{W}_i^2(k+h)
:=
\sum_{h=1}^{H}
\sum_{j\in\mathcal{S}{i,k+h}}
\pi_{j,k+h}
||y_{i,k+h}-q_j||^2.
\label{eq:local_wasserstein_cost}
\end{equation}

Defining the weighted barycenter at each prediction step as
\begin{equation}
\bar q_{i,k+h}
:=
\frac{
\sum_{j\in\mathcal{S}_{i,k+h}}
\pi_{j,k+h}q_j
}{
\sum_{j\in\mathcal{S}_{i,k+h}}
\pi_{j,k+h}
},
\label{eq:barycenter}
\end{equation}
the Wasserstein cost becomes a weighted least-squares tracking problem
with respect to the time-varying barycenter. Stacking the predicted
outputs and barycenters gives
\begin{equation}
\begin{aligned}
Y_i^{k|H}
&:=
\begin{bmatrix}
y_{i,k+1}^\top & \cdots &
y_{i,k+H}^\top
\end{bmatrix}^\top,\\
\Xi_i^{k|H}
&:=
\begin{bmatrix}
\bar q_{i,k+1}^\top & \cdots &
\bar q_{i,k+H}^\top
\end{bmatrix}^\top.
\end{aligned}
\label{eq:YQ}
\end{equation}

Using
$
\small
\boldsymbol{\Omega}_i^{k|H}
:=
\operatorname{diag}
\Bigg(
\sqrt{\sum_{j\in\mathcal{S}_{i,k+h}}
\pi_{j,k+h}},\mathbf I_d
\Bigg)_{h=1}^{H},
\label{eq:Omega}
$
the local Wasserstein objective can then be written as a weighted
least-squares tracking cost as follows.

\begin{proposition}[Wasserstein-to-barycenter reduction]
\label{prop:wasserstein_barycenter} (Proposition 1 in \cite{lee2025connectivity})
Let the transport weights $\pi_{j,k+h}$ be fixed over the prediction
horizon. Then,
\begin{equation}
\sum_{h=1}^{H}\mathcal{W}_i^2(k+h)
=
\left\|
\boldsymbol{\Omega}_i^{k|H}
\left(
Y_i^{k|H}-\Xi_i^{k|H}
\right)
\right\|^2
+\mathrm{const.}
\label{eq:wasserstein_reduction}
\end{equation}
\end{proposition}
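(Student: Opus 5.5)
The plan is to prove the identity one prediction step at a time and then sum over $h=1,\dots,H$. Fix $h$ and write $w_h := \sum_{j\in\mathcal{S}_{i,k+h}}\pi_{j,k+h}$, which is positive whenever $\mathcal{S}_{i,k+h}$ carries nonzero remaining weight; if $w_h=0$ the summand vanishes on both sides and can be dropped. The transport weights are fixed over the horizon by hypothesis, so each $\pi_{j,k+h}$ is a constant. Hence both $w_h$ and $\bar q_{i,k+h}$ in \eqref{eq:barycenter} do not depend on the decision variables, and the only variable quantity is $y_{i,k+h}$.

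The key step is a weighted parallel-axis (bias--variance) decomposition. For each $j$, write $y_{i,k+h}-q_j = (y_{i,k+h}-\bar q_{i,k+h}) + (\bar q_{i,k+h}-q_j)$ and expand the square:
\begin{equation*}
\sum_{j}\pi_{j,k+h}\|y_{i,k+h}-q_j\|^2
= w_h\|y_{i,k+h}-\bar q_{i,k+h}\|^2
+ 2(y_{i,k+h}-\bar q_{i,k+h})^\top\sum_j\pi_{j,k+h}(\bar q_{i,k+h}-q_j)
+ \sum_j\pi_{j,k+h}\|\bar q_{i,k+h}-q_j\|^2 .
\end{equation*}
By the definition of the barycenter, $\sum_j\pi_{j,k+h}q_j = w_h\bar q_{i,k+h}$, so the cross term vanishes identically. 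The last term involves only the fixed samples and weights, so it is a constant $c_h$ independent of $y_{i,k+h}$.

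Summing over $h$ gives
\[
\sum_{h=1}^H\mathcal{W}_i^2(k+h)=\sum_{h=1}^H w_h\|y_{i,k+h}-\bar q_{i,k+h}\|^2+\sum_h c_h .
\]
It remains to identify the first sum with $\|\boldsymbol{\Omega}_i^{k|H}(Y_i^{k|H}-\Xi_i^{k|H})\|^2$. Read $\boldsymbol{\Omega}_i^{k|H}$ as the block-diagonal matrix with $h$-th block $\sqrt{w_h}\,\mathbf I_d$. Then the $h$-th block of $\boldsymbol{\Omega}_i^{k|H}(Y_i^{k|H}-\Xi_i^{k|H})$ is $\sqrt{w_h}(y_{i,k+h}-\bar q_{i,k+h})$, using the stacking in \eqref{eq:YQ}. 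The squared Euclidean norm of a stacked vector is the sum of the squared norms of its blocks, which gives \eqref{eq:wasserstein_reduction} with $\mathrm{const.}=\sum_h c_h$.

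No step is genuinely hard. The main point of care is that the constant's independence from the decision variables depends entirely on the fixed-weight hypothesis. If the $\pi_{j,k+h}$ or the selected sets $\mathcal{S}_{i,k+h}$ moved with the predicted positions, the barycenters would vary and the "constant" would not be constant. I would therefore state this dependence explicitly. I would also state the degenerate-weight convention ($w_h=0$), since the barycenter is undefined in that case.
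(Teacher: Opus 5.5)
Your proof is correct and complete. The paper does not reprove this identity but imports it as Proposition 1 of the cited work, and your weighted parallel-axis expansion, in which the cross term vanishes because $\sum_j \pi_{j,k+h} q_j = w_h \bar q_{i,k+h}$, is the standard completing-the-square argument for it. Reading $\boldsymbol{\Omega}_i^{k|H}$ as block-diagonal with blocks $\sqrt{w_h}\,\mathbf I_d$ is the only dimensionally consistent interpretation of the paper's notation, and your remarks on the $w_h=0$ case and on the fixed-weight hypothesis are appropriate additions.
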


For the LTI dynamics, the stacked output prediction is affine in the
stacked control input
\begin{equation}
Y_i^{k|H}
=
\Theta_iU_i^{k|H}+\Phi_i x_{i,k},
\label{eq:stacked_prediction}
\end{equation}
where
\begin{equation}
U_i^{k|H}
=
\begin{bmatrix}
u_{i,k}^\top & \cdots & u_{i,k+H-1}^\top
\end{bmatrix}^\top,
\end{equation}
\begin{equation}
\small
\Theta_i=
\begin{bmatrix}
C_iB_i & 0 & \cdots & 0\\
C_iA_iB_i & C_iB_i & \cdots & 0\\
\vdots & \vdots & \ddots & \vdots\\
C_iA_i^{H-1}B_i &
C_iA_i^{H-2}B_i & \cdots &
C_iB_i
\end{bmatrix},
\label{eq:Theta}
\end{equation}
\begin{equation}
\small
\Phi_i
=
\begin{bmatrix}
(C_iA_i)^\top &
(C_iA_i^2)^\top &
\cdots &
(C_iA_i^{H})^\top
\end{bmatrix}^{\top}.
\label{eq:Phi}
\end{equation}

Consequently, Proposition~\ref{prop:wasserstein_barycenter} establishes that the complex distribution matching problem can be effectively recast into a standard weighted barycentric tracking task together with the weight update stage at each time step introduced in Section II.

\section{Nonlinear Optimality Conditions and SCP-Based Local QP Formulation}
\label{sec:opt_nonlin}

Consider the discrete-time control-affine nonlinear dynamics in \eqref{eq:dyn}. For clarity, we omit the agent index $i$ throughout this section. Let $k$ denote the current decision time and $t\in{k,\ldots,k+H-1}$ the prediction-time index. The dynamics at each prediction step are given by
\[
    x_{t+1}=f(x_t)+G(x_t)u_t,\;\; y_t=Cx_t,
    \quad t=k,\ldots,k+H-1.
\]

Let the stage cost at time $t$ be
\begin{equation}
    \ell_t(x_t,u_t) := (y_t-\bar q_t)^\top Q_t (y_t-\bar q_t) + u_t^\top R\, u_t,
    \label{eq:stage_cost}
\end{equation}
where $Q_t\in\mathbb{R}^{d\times d}$ is symmetric and positive semidefinite and
$R\succ 0$.
Over a prediction horizon of length $H$, let the terminal cost be
\[
    \ell_{k+H}(x_{k+H})
    =(y_{k+H}-\bar q_{k+H})^\top Q_{k+H}(y_{k+H}-\bar q_{k+H}),
\]
if needed.

\subsection{PMP Characterization and NMPC Baseline}
\label{subsec:pmp_baseline}

The finite-horizon nonlinear optimal control problem can be characterized by discrete-time Pontryagin's Minimum Principle (PMP). For the control-affine dynamics
\eqref{eq:dyn}, any locally optimal solution satisfies the first-order necessary conditions:
\begin{subequations}\label{eq:PMP}
\begin{align}
x_{t+1}
&= f(x_t)+G(x_t)u_t,
\label{eq:PMP_dyn}\\
\lambda_t
&= 2C^\top Q_t(Cx_t-\bar q_t)
   +\frac{\partial \Psi}{\partial x}(x_t,u_t)^\top\lambda_{t+1},
\label{eq:PMP_costate}\\
\lambda_{k+H}
&= 2C^\top Q_{k+H}(Cx_{k+H}-\bar q_{k+H}),
\label{eq:PMP_terminal}
\end{align}
\end{subequations}
where 
\begin{equation}
\Psi(x,u):=f(x)+G(x)u.
\label{eq:Psi}
\end{equation}

Under box control constraints
$u_{\mathrm{min}} \le u_t \le u_{\mathrm{max}}$, the control satisfies the
pointwise Hamiltonian minimization condition
\begin{equation}
u_t^\star
\in
\arg\min_{u_{\mathrm{min}}\le u\le u_{\mathrm{max}}}
\left\{
u^\top R u
+\lambda_{t+1}^\top G(x_t)u
\right\}.
\label{eq:PMP_control_min}
\end{equation}
Equivalently, this condition follows from the KKT conditions of the
box-constrained quadratic problem. When $R$ is diagonal, its solution is
available in closed form as
\begin{equation}
u_t^\star
=
\operatorname{sat}_{[u_{\mathrm{min}}, u_{\mathrm{max}}]}
\left(
-\frac{1}{2}R^{-1}G(x_t)^\top\lambda_{t+1}
\right).
\label{eq:PMP_control}
\end{equation}
Thus, the saturation in \eqref{eq:PMP_control} is the exact pointwise
minimizer rather than a heuristic clipping operation.

These conditions formulate a two-point boundary-value problem (TPBVP) with
$x_k$ given and $\lambda_{k+H}$ specified at the terminal horizon. A standard
shooting approach treats the initial costate $\lambda_k = s$ as a decision
variable, propagating the state-costate dynamics to find a root of the terminal
residual
\begin{equation}
F(s) := \lambda_{k+H}(s)
- 2C^\top Q_{k+H}\big(Cx_{k+H}(s)-\bar q_{k+H}\big) = 0.
\label{eq:shooting_residual}
\end{equation}
While this PMP-based shooting method serves as our baseline NMPC reference,
the saturation operator in \eqref{eq:PMP_control} introduces active-set
switching and hence makes $F(s)$ generally only piecewise smooth, which can
complicate Newton-type root-finding and Jacobian evaluation.

To overcome these limitations, our primary computational strategy relies on
Sequential Convex Programming (SCP). Rather than solving the resulting
piecewise-smooth shooting problem, we locally linearize the dynamics around a
nominal trajectory and solve a sequence of convex QPs, with the input bounds
imposed directly as QP constraints. This provides a convenient numerical
formulation for handling input constraints while preserving the Wasserstein
tracking objective. The resulting local QP formulation is presented next.

\subsection{SCP-Based Local QP Reduction}
\label{subsec:qp_reduction}

The preceding PMP formulation provides the necessary optimality conditions
for the original nonlinear problem and motivates a shooting-based NMPC
procedure as a reference method. The proposed approach instead avoids
solving the nonlinear TPBVP directly. We employ a receding-horizon
SCP scheme in which the nonlinear dynamics are linearized once at each
decision time. After the first optimized input is applied, the horizon
is shifted, the dynamics are relinearized about the new operating point,
and a new convex QP is solved. This provides a convenient numerical
formulation for handling input constraints while preserving the
Wasserstein tracking objective.

At decision time $k$, let $(x_k,u_k^{\mathrm{nom}})$ denote the nominal
state-input pair used for linearization. Given the definition of $\Psi$
in \eqref{eq:Psi}, define
\begin{equation}
\begin{aligned}
A_k
&:=
\frac{\partial\Psi}{\partial x}
(x_k,u_k^{\mathrm{nom}})
=
\frac{\partial f}{\partial x}(x_k)
+
\sum_{j=1}^{m}
\frac{\partial G_j}{\partial x}(x_k)
u_{k,j}^{\mathrm{nom}},\\
B_k
&:=
\frac{\partial\Psi}{\partial u}
(x_k,u_k^{\mathrm{nom}})
=
G(x_k).
\end{aligned}
\label{eq:scp_linearization_matrices}
\end{equation}

A first-order Taylor expansion of $\Psi$ about
$(x_k,u_k^{\mathrm{nom}})$ gives
\begin{equation}
x_{t+1}
\approx
\Psi(x_k,u_k^{\mathrm{nom}})
+
A_k(x_t-x_k)
+
B_k(u_t-u_k^{\mathrm{nom}}).
\label{eq:affine_linearization}
\end{equation}
Defining
\begin{equation}
\delta x_t:=x_t-x_k,
\qquad
\delta u_t:=u_t-u_k^{\mathrm{nom}},
\end{equation}
and
\begin{equation}
c_k:=\Psi(x_k,u_k^{\mathrm{nom}})-x_k,
\label{eq:linearization_offset}
\end{equation}
the local affine model becomes
\begin{equation}
\delta x_{t+1}
=
c_k+A_k\delta x_t+B_k\delta u_t,
\qquad
t=k,\ldots,k+H-1,
\label{eq:affine_linearized_dynamics}
\end{equation}
with $\delta x_k=0$. In general, $c_k\neq0$ because the
linearization point need not be an equilibrium of the nominal dynamics.

Note that the linearization is performed once at the current decision
point $(x_k,u_k^{\mathrm{nom}})$. The resulting $A_k$, $B_k$, and $c_k$
are then fixed throughout the $H$-step prediction horizon. This
fixed-model construction provides a single consistent linearized
prediction model for the local QP and enables explicit analysis of the
resulting closed-loop approximation error.

To separate the input-independent part of the affine prediction from
the input-dependent part, let $\delta x_t^{0}$ denote the prediction of
\eqref{eq:affine_linearized_dynamics} with $\delta u_t=0$. Thus,
\begin{equation}
\delta x_{t+1}^{0}
=
c_k+A_k\delta x_t^{0},
\qquad
\delta x_k^{0}=0.
\label{eq:zero_input_affine_prediction}
\end{equation}
Define the input-induced state deviation by
\begin{equation}
\delta\hat{x}_t
:=
\delta x_t-\delta x_t^{0}.
\end{equation}
Subtracting \eqref{eq:zero_input_affine_prediction} from
\eqref{eq:affine_linearized_dynamics} gives
\begin{equation}
\delta\hat{x}_{t+1}
=
A_k\delta\hat{x}_t+B_k\delta u_t,
\qquad
\delta\hat{x}_k=0.
\label{eq:increment_linear_dynamics}
\end{equation}
Since $y_t=Cx_t$, the corresponding input-induced output deviation is
\[
\delta\hat{y}_t=C\delta\hat{x}_t.
\]

Stacking these output and input deviations over the horizon, define
\begin{equation}
\begin{aligned}
\Delta Y
:=
\operatorname{col}
\left(
\delta\hat{y}_{k+1},\ldots,
\delta\hat{y}_{k+H}
\right),
\\
\Delta U
:=
\operatorname{col}
\left(
\delta u_k,\ldots,
\delta u_{k+H-1}
\right).
\end{aligned}
\end{equation}
Because $A_k$ and $B_k$ are fixed over the horizon, the standard
finite-horizon input-output relation gives
\begin{equation}
\Delta Y=\Theta\Delta U,
\label{eq:deltaY_theta_deltaU}
\end{equation}
where $\Theta$ is given by \eqref{eq:Theta}.

Further, defining
\begin{equation}
Y
=
\operatorname{col}
\left(
y_{k+1},\ldots,y_{k+H}
\right),
\;
Y^{0}
=
\operatorname{col}
\left(
y_{k+1}^{0},\ldots,y_{k+H}^{0}
\right),
\end{equation}
with
$
y_t^{0}
=
C\left(x_k+\delta x_t^{0}\right),
$
we obtain the affine prediction
\begin{equation}
Y
=
Y^{0}+\Theta\Delta U.
\label{eq:affine_output_prediction}
\end{equation}

Using $\Xi$ for the stacked Wasserstein barycenter in \eqref{eq:YQ},
the local tracking objective is constructed by
\begin{equation}
J(\Delta U)
=
\left\|
\boldsymbol{\Omega}
\left(
Y^{0}+\Theta\Delta U-\Xi
\right)
\right\|^2
+
\Delta U^\top\mathcal R\Delta U,
\label{eq:scp_tracking_objective}
\end{equation}
where
$
\mathcal R=I_H\otimes R$,
$
R\succ0.
$
The above cost $J$ corresponds to the least square cost in \eqref{eq:wasserstein_reduction} combined with the input penalty term.

Expanding \eqref{eq:scp_tracking_objective} with respect to $\Delta U$
gives
\begin{equation}
J(\Delta U)
=
\frac12\Delta U^\top\mathcal H\Delta U
+
g^\top\Delta U
+
\mathrm{const.},
\end{equation}
where the constant term is independent of $\Delta U$, and
\begin{equation}
\begin{aligned}
\mathcal H
&=
2\left[
(\boldsymbol{\Omega}\Theta)^\top
(\boldsymbol{\Omega}\Theta)
+
\mathcal R
\right],\\
g
&=
2(\boldsymbol{\Omega}\Theta)^\top
\boldsymbol{\Omega}
\left(
Y^{0}-\Xi
\right).
\end{aligned}
\label{eq:H and g}
\end{equation}
Since $\mathcal R\succ0$, it follows that
$\mathcal H\succ0$, establishing strict convexity of the local QP.

\begin{theorem}[Optimal Solution for Nonlinear $D^2$OC via Strictly Convex Local QP]
\label{thm:lin_qp}
At each SCP iteration, the local control update is obtained from the
quadratic program
\begin{equation}
\begin{aligned}
\min_{\Delta U}\quad
&
\frac12\Delta U^\top\mathcal H\Delta U
+
g^\top\Delta U\\
\mathrm{s.t.}\quad
&
u_{\min}^{(H)}
\leq
U^{\mathrm{nom}}+\Delta U
\leq
u_{\max}^{(H)},
\end{aligned}
\label{eq:local_qp}
\end{equation}
where $\mathcal H$ and $g$ are defined in \eqref{eq:H and g}, and
\begin{align*}
U^{\mathrm{nom}}
&:=
\mathbf{1}_H\otimes u_k^{\mathrm{nom}},\\
u_{\min}^{(H)}
&:=
\mathbf{1}_H\otimes u_{\min},
\quad
u_{\max}^{(H)}
:=
\mathbf{1}_H\otimes u_{\max}.
\end{align*}

Thus, the box constraints are imposed at every prediction stage
$t=k,\ldots,k+H-1$.

The local QP is strictly convex and therefore admits a unique global
minimizer whenever its constraints are feasible. If the unconstrained
minimizer is feasible, it is given by
\begin{equation}
\Delta U^{\mathrm{un}}
=
-\mathcal H^{-1}g.
\label{eq:local_qp_unconstrained_solution}
\end{equation}
Otherwise, the unique constrained minimizer is characterized by the KKT
conditions.
\end{theorem}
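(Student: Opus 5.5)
The argument has three parts: recast the local cost as a quadratic in $\Delta U$, show the feasible set is a nonempty closed convex polyhedron, and then use standard strictly convex optimization facts.

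First, I substitute the affine prediction \eqref{eq:affine_output_prediction} into the Wasserstein tracking cost. By Proposition~\ref{prop:wasserstein_barycenter}, this cost reduces to \eqref{eq:scp_tracking_objective}, up to a constant that does not depend on $\Delta U$. Expanding the square gives the quadratic form with $\mathcal H$ and $g$ as in \eqref{eq:H and g}, so minimizing $J$ is equivalent to minimizing the objective in \eqref{eq:local_qp}. The stagewise box constraints $u_{\min}\le u_t\le u_{\max}$ for $t=k,\ldots,k+H-1$ rewrite as $u_{\min}^{(H)}\le U^{\mathrm{nom}}+\Delta U\le u_{\max}^{(H)}$, since $u_t=u_k^{\mathrm{nom}}+\delta u_t$. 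This set is a box, hence closed, convex and polyhedral. It is nonempty whenever $u_{\min}\le u_{\max}$ componentwise.

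Second, I prove strict convexity. For any $v\neq 0$,
\[
v^\top\mathcal H v = 2\|\boldsymbol\Omega\Theta v\|^2 + 2v^\top\mathcal R v \ge 2\lambda_{\min}(R)\|v\|^2 > 0,
\]
so $\mathcal H\succ0$ and the objective is strictly convex and coercive. A coercive continuous function on a nonempty closed set attains its minimum, which gives existence. Uniqueness follows by contradiction: if two distinct minimizers existed, their midpoint would be feasible by convexity, and strict convexity would give it a strictly smaller value. Because the problem is convex, this local minimizer is also the global one.

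Third, I characterize the minimizer. The objective is differentiable with gradient $\mathcal H\Delta U+g$, and $\mathcal H$ is invertible, so the unconstrained stationarity condition has the unique solution $\Delta U^{\mathrm{un}}=-\mathcal H^{-1}g$. This point is the unconstrained global minimizer. If it is feasible, it is also the constrained minimizer, because the constrained minimum can never fall below the unconstrained one. Otherwise, the constraints are affine, so Slater-type regularity is automatic: linearity constraint qualification holds. The KKT conditions are therefore necessary, and by convexity they are also sufficient. Concretely, they read
\[
\mathcal H\Delta U+g+\mu^+-\mu^-=0,\quad \mu^\pm\ge0,
\]
together with primal feasibility and complementary slackness. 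These conditions uniquely identify the minimizer.

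None of these steps is a serious obstacle. The only point needing care is the justification that the KKT conditions are both necessary and sufficient, and this follows directly from the linearity of the box constraints and the convexity of the objective.
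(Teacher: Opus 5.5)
Your proposal is correct and follows the paper's proof. Both derive $\mathcal H\succ0$ from $\mathcal R=I_H\otimes R\succ0$, conclude uniqueness from strict convexity over the affine box, read off $\Delta U^{\mathrm{un}}=-\mathcal H^{-1}g$ from stationarity, and treat the KKT conditions as necessary and sufficient because the objective is convex and the constraints are affine; you simply make explicit what the paper leaves implicit (existence via coercivity, the midpoint uniqueness argument, why a feasible unconstrained minimizer solves the constrained problem, and the linear constraint qualification).
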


\begin{proof}
Let $M:=\Omega\Theta$. From \eqref{eq:H and g},
\[
\mathcal H
=
2M^\top M+2\mathcal R,
\qquad
\mathcal R=I_H\otimes R.
\]
Since $M^\top M\succeq0$ and $R\succ0$, we have
$\mathcal R\succ0$, and hence
$
\mathcal H\succ0.
$
Therefore, the QP objective is strictly convex while the box constraints
are affine. Consequently, whenever the feasible set is nonempty, the QP
admits a unique global minimizer.

If the unconstrained minimizer satisfies the box constraints, the first-order
condition
$
\mathcal H\Delta U^{\mathrm{un}}+g=0
$
directly gives \eqref{eq:local_qp_unconstrained_solution}.

For the constrained case, define the Lagrangian
\begin{align}
\mathcal L(\Delta U,\lambda^+,\lambda^-)
={}&
\frac12\Delta U^\top\mathcal H\Delta U
+g^\top\Delta U
\nonumber\\
&+
(\lambda^+)^\top
\left(
U^{\mathrm{nom}}+\Delta U-u_{\max}^{(H)}
\right)
\nonumber\\
&+
(\lambda^-)^\top
\left(
u_{\min}^{(H)}-U^{\mathrm{nom}}-\Delta U
\right).
\end{align}
The KKT conditions are
\begin{align}
\mathcal H\Delta U^\star+g+\lambda^+-\lambda^-&=0,
\label{eq:qp_kkt_stationarity}\\
\lambda^+,\lambda^-&\geq0,
\label{eq:qp_kkt_dual}\\
u_{\min}^{(H)}
\leq U^{\mathrm{nom}}+\Delta U^\star
&\leq u_{\max}^{(H)},
\label{eq:qp_kkt_primal}\\
\lambda^+\odot
\left(
U^{\mathrm{nom}}+\Delta U^\star-u_{\max}^{(H)}
\right)
&=0,
\label{eq:qp_kkt_upper_complementarity}\\
\lambda^-\odot
\left(
u_{\min}^{(H)}-U^{\mathrm{nom}}-\Delta U^\star
\right)
&=0.
\label{eq:qp_kkt_lower_complementarity}
\end{align}
Since the objective is convex and the constraints are affine, these KKT
conditions are necessary and sufficient for global optimality. Strict
convexity of the objective further guarantees uniqueness of
$\Delta U^\star$.
\end{proof}

\begin{remark}[Significance of the SCP-QP Formulation]\label{rem:scp_contribution}
While the general iterative solution mechanism aligns with SCP, Theorem~\ref{thm:lin_qp} provides a non-trivial reduction tailored specifically to the $D^2OC$ framework. Rather than applying generic Sequential Quadratic Programming (SQP), this formulation explicitly maps the Wasserstein barycenter tracking metrics into the Hessian $\mathcal H$ and gradient $g$ matrices. This structured reduction guarantees strict convexity ($\mathcal H \succ 0$) and uniqueness of $\Delta U^\star$ at each step, serving as the precise computational backbone for the rigorous error bound and certificate analysis developed in the sequel.
\end{remark}

\section{Linearization Error and Finite-Horizon Error Certification}
\label{sec:error_analysis}

The local QP in Section~\ref{sec:opt_nonlin} is constructed from a first-order
linearization of the nonlinear dynamics. Although this reduction yields a convex
optimization problem, the resulting prediction can deviate from the nonlinear
trajectory due to both input clipping and the higher-order Taylor remainder of the
dynamics. This section quantifies these effects and establishes a finite-horizon
bound on the resulting linearization error.
The following assumptions are provided to analyze this linearization error. 
\begin{assumption}[Regularity and curvature bounds]\label{assum:regularity}
$f,G\in C^2$ on a convex set $\mathcal X\times\mathcal U\subset\mathbb R^n\times\mathbb R^m$
containing the relevant state-input trajectories, with
$
\|f_{xx}(x)\|\le M_f,\;\|\partial_{xx}G_j(x)\|\le M_{Gx},\; \|\partial_x G_j(x)\|\le M_{Gu},
\;\forall (x,u)\in\mathcal X\times\mathcal U.
$
Set $\bar M:=M_f+mM_{Gx}$ and $\bar M':=mM_{Gu}$, where $m$ denotes the input dimension.
\end{assumption}
This regularity condition ensures that the second-order Taylor remainder of $\Psi$ is uniformly bounded over $\mathcal X$. Furthermore, the convexity of $\mathcal X$ guarantees that line segments connecting candidate trajectories remain inside $\mathcal X$, enabling the application of the mean-value theorem.

\begin{assumption}[Uniform sensitivity on the operative tube]
\label{assum:sensitivity}
Let $\Psi(x,u)$ be defined as in \eqref{eq:Psi}. There exists a finite
constant $L>0$ such that
\[
\|\partial_x\Psi(x,u)\|\le L,
\qquad
\forall (x,u)\in\mathcal X\times\mathcal U.
\]
Moreover, define
\[
\kappa:=\sup_{x\in\mathcal X}\|G(x)\|<\infty.
\]
Then,
$
\|\partial_u\Psi(x,u)\|
=\|G(x)\|
\le\kappa,
\quad
\forall (x,u)\in\mathcal X\times\mathcal U.
$
\end{assumption}

The symbol $\kappa$ limits the single-step dynamic perturbation from input clipping, while the Lipschitz bound $L$ permits the propagation of this perturbation across the prediction horizon via a discrete Gr\"onwall inequality \cite{pachpatte1998inequalities}.

\begin{assumption}[Small nominal excursion]
\label{assum:small-excursion}
There exist constants $\varepsilon_x>0$ and $\varepsilon_u>0$ such that
\[
\|\delta x_t^{\mathrm{un}}\| \le \varepsilon_x,
\quad
\|\delta u_t^{\mathrm{un}}\| \le \varepsilon_u,
\quad
t = k, \dots, k+H-1.
\]
\end{assumption}
The above assumption ensures that the nominal trajectory remains within the
local neighborhood in which the subsequent sensitivity bounds are valid.

\begin{lemma}[Uniform control deviation bound]
\label{lemma:control_deviation}
Let $\Delta U^\star$ and $\Delta U^{\mathrm{un}}$ denote the
constrained and unconstrained minimizers of the local QP in
Theorem~\ref{thm:lin_qp}. Define
\begin{equation}
\delta_u^{\max}
:=
\left\|
\Delta U^\star-\Delta U^{\mathrm{un}}
\right\|.
\label{eq:delta_u_max}
\end{equation}
Then, for every prediction stage
$t=k,\ldots,k+H-1$,
\begin{equation}
\left\|
u_t^\star-u_t^{\mathrm{un}}
\right\|
\leq
\delta_u^{\max}.
\label{eq:stage_control_deviation}
\end{equation}
\end{lemma}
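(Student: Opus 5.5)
The plan is to reduce the stagewise claim \eqref{eq:stage_control_deviation} to the fact that the Euclidean norm of a stacked vector dominates the norm of each of its blocks.

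\textbf{Step 1: relate the stage inputs to the stacked increments.} By construction in Section~\ref{subsec:qp_reduction}, $\delta u_t = u_t - u_k^{\mathrm{nom}}$. The stacked increment is $\Delta U = \operatorname{col}(\delta u_k,\ldots,\delta u_{k+H-1})$, with $U^{\mathrm{nom}} = \mathbf 1_H\otimes u_k^{\mathrm{nom}}$. Applying this to both the constrained and the unconstrained minimizers of Theorem~\ref{thm:lin_qp} gives
\[
u_t^\star-u_t^{\mathrm{un}}=\delta u_t^\star-\delta u_t^{\mathrm{un}}
= E_t\left(\Delta U^\star-\Delta U^{\mathrm{un}}\right).
\]
Here $E_t := e_{t-k+1}^\top\otimes \mathbf I_m$ is the block-selection matrix that extracts the $(t-k+1)$-th block of length $m$. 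The common nominal input $u_k^{\mathrm{nom}}$ cancels because it is the same in both solutions. Both minimizers exist: $\Delta U^{\mathrm{un}} = -\mathcal H^{-1}g$ is well defined since $\mathcal H\succ0$, and $\Delta U^\star$ is the unique constrained minimizer provided the box constraints are feasible, which we assume, as in Theorem~\ref{thm:lin_qp}.

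\textbf{Step 2: bound each block by the stacked norm.} Write $D := \Delta U^\star-\Delta U^{\mathrm{un}}$. Then
\[
\|D\|^2=\sum_{s=k}^{k+H-1}\|\delta u_s^\star-\delta u_s^{\mathrm{un}}\|^2\ \ge\ \|\delta u_t^\star-\delta u_t^{\mathrm{un}}\|^2 .
\]
Equivalently, $\|E_t\|=1$ because $E_tE_t^\top=\mathbf I_m$. Taking square roots and using the definition \eqref{eq:delta_u_max} of $\delta_u^{\max}$ yields \eqref{eq:stage_control_deviation} for every prediction stage $t=k,\ldots,k+H-1$.

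\textbf{Main obstacle.} There is no real analytic difficulty; the only care needed is bookkeeping. One must interpret $u_t^{\mathrm{un}}$ as $u_k^{\mathrm{nom}}+\delta u_t^{\mathrm{un}}$, i.e., the unclipped QP solution. It is not the saturated PMP control \eqref{eq:PMP_control}. This reading makes the subtraction in Step 1 legitimate.

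As an optional refinement, $\delta_u^{\max}$ can itself be bounded through the KKT system \eqref{eq:qp_kkt_stationarity}. Subtracting the unconstrained condition $\mathcal H\Delta U^{\mathrm{un}}+g=0$ gives $\mathcal H D = \lambda^- - \lambda^+$, so $\|D\|\le \|\mathcal H^{-1}\|\,\|\lambda^+-\lambda^-\|$. Alternatively, because $\Delta U^\star$ is the $\mathcal H$-weighted projection of $\Delta U^{\mathrm{un}}$ onto the box, $\|D\|$ is at most $\sqrt{\kappa(\mathcal H)}$ times the Euclidean distance from $U^{\mathrm{nom}}+\Delta U^{\mathrm{un}}$ to the box. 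Neither bound is needed for the lemma as stated.
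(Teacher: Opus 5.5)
Your proposal is correct and matches the paper's proof: both cancel the common nominal input to get $u_t^\star-u_t^{\mathrm{un}}=\delta u_t^\star-\delta u_t^{\mathrm{un}}$, then use that the Euclidean norm of one block of $\Delta U^\star-\Delta U^{\mathrm{un}}$ is at most the norm of the whole stacked vector. Your KKT and $\mathcal H$-weighted projection remarks are correct but, as you note, not needed for the lemma.
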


\begin{proof}
Recall that
$\delta u_t^\star
:=
u_t^\star-u_k^{\mathrm{nom}}$,
$
\delta u_t^{\mathrm{un}}
:=
u_t^{\mathrm{un}}-u_k^{\mathrm{nom}}.
$
Since $\Delta U^\star$ and $\Delta U^{\mathrm{un}}$ are stacked as
\begin{equation}
    \begin{aligned}
\Delta U^\star
&=
\operatorname{col}
\left(
\delta u_k^\star,\ldots,\delta u_{k+H-1}^\star
\right),
\\
\Delta U^{\mathrm{un}}
&=
\operatorname{col}
\left(
\delta u_k^{\mathrm{un}},\ldots,\delta u_{k+H-1}^{\mathrm{un}}
\right),
    \end{aligned}
\end{equation}
we have
$
u_t^\star-u_t^{\mathrm{un}}
=
\delta u_t^\star-\delta u_t^{\mathrm{un}}.
$
Because $\delta u_t^\star-\delta u_t^{\mathrm{un}}$ is a block of
$\Delta U^\star-\Delta U^{\mathrm{un}}$, the Euclidean norm satisfies
\[
\left\|
u_t^\star-u_t^{\mathrm{un}}
\right\|
=
\left\|
\delta u_t^\star-\delta u_t^{\mathrm{un}}
\right\|
\leq
\left\|
\Delta U^\star-\Delta U^{\mathrm{un}}
\right\|
=
\delta_u^{\max}.
\]
\end{proof}

The preceding development establishes the local SCP subproblem and
quantifies the control deviation introduced by the input constraints. We next use this control deviation as
the starting point for analyzing the accuracy of the local linearization
along the resulting nonlinear trajectory.

\begin{theorem}[Linearization error under QP clipping]
\label{thm:err_with_clip}
Under Assumptions~\ref{assum:regularity}--\ref{assum:small-excursion}, let
$A,B$ denote the local linearization matrices of Theorem~\ref{thm:lin_qp}
at $(x_k,u_k^{\mathrm{nom}})$, with $\|A\|\le\alpha$.
Let $\Delta U^{\mathrm{un}}$ and $\Delta U^\star$ be the unconstrained and
box-constrained QP solutions, respectively, and let
$x_t^{\mathrm{un}}$ and $x_t^\star$ denote the corresponding nonlinear
trajectories initialized at the common state $x_k$.

Define
$
\delta x_t^\star:=x_t^\star-x_k$ and
$
\delta u_t^\star:=u_t^\star-u_k^{\mathrm{nom}},
$
and let $x_t^{\mathrm{lin}}$ be the state predicted by the local linear
model under the \emph{applied constrained input},
$$\delta x_{t+1}^{lin}=c_k+A\delta x_t^{lin}+B\delta u_t^\star,\qquad\delta x_k^{lin}=0,$$
where $c_k$ is described in \eqref{eq:linearization_offset}.
Set $e_t:=x_t^\star-x_t^{\mathrm{lin}}$, $e_k=0$, and let $\delta_u^{\max}$
be as in \eqref{eq:delta_u_max}. Define
\[
\bar\varepsilon_x
:=
\varepsilon_x+
\kappa\delta_u^{\max}
\sum_{j=0}^{N-k-1}L^j,
\qquad
\bar\varepsilon_u
:=
\varepsilon_u+\delta_u^{\max}.
\]
Then, for every $k<N\le k+H$,
\begin{equation}
\label{eq:err_clip_final}
\|e_N\|
\le
\Big(
\sum_{j=0}^{N-k-1}\alpha^j
\Big)
\left[
\frac12\bar M\bar\varepsilon_x^2
+
\bar M'\bar\varepsilon_x\bar\varepsilon_u
\right].
\end{equation}
Equivalently, if $\alpha\neq1$ and $L\neq1$,
\[
\bar\varepsilon_x
=
\varepsilon_x
+
\kappa\delta_u^{\max}
\frac{L^{N-k}-1}{L-1},
\]
and
\[
\|e_N\|
\le
\frac{\alpha^{N-k}-1}{\alpha-1}
\left[
\frac12\bar M\bar\varepsilon_x^2
+
\bar M'\bar\varepsilon_x\bar\varepsilon_u
\right].
\]
If $\alpha=1$ or $L=1$, the corresponding geometric sum is replaced
by $N-k$.
\end{theorem}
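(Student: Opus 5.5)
The plan is to derive a one-step error recursion for $e_t$ and then unroll it. Writing $\Psi$ as in \eqref{eq:Psi}, the nonlinear trajectory under the applied input satisfies $x_{t+1}^\star=\Psi(x_t^\star,u_t^\star)$. The linear prediction satisfies $x_{t+1}^{\mathrm{lin}}=x_k+c_k+A\delta x_t^{\mathrm{lin}}+B\delta u_t^\star=\Psi(x_k,u_k^{\mathrm{nom}})+A\delta x_t^{\mathrm{lin}}+B\delta u_t^\star$. Adding and subtracting $A\delta x_t^\star$ gives
\[
e_{t+1}=A e_t+r_t,\qquad r_t:=\Psi(x_t^\star,u_t^\star)-\Psi(x_k,u_k^{\mathrm{nom}})-A\delta x_t^\star-B\delta u_t^\star,
\]
so $r_t$ is exactly the first-order Taylor remainder of $\Psi$ about $(x_k,u_k^{\mathrm{nom}})$ evaluated at the increment $(\delta x_t^\star,\delta u_t^\star)$. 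Unrolling from $e_k=0$ yields $e_N=\sum_{t=k}^{N-1}A^{N-1-t}r_t$. Hence $\|e_N\|\le(\sum_{j=0}^{N-k-1}\alpha^j)\max_t\|r_t\|$, and it remains to bound $\|r_t\|$ uniformly by the bracket in \eqref{eq:err_clip_final}.

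To bound the remainder, I will use that $\Psi$ is affine in $u$, so its second derivative has only an $xx$ block and an $xu$ block; the $uu$ block vanishes. By Taylor's theorem with integral remainder along the segment in the convex set $\mathcal X\times\mathcal U$ (Assumption~\ref{assum:regularity}),
\[
\|r_t\|\le\tfrac12\sup\|\Psi_{xx}\|\,\|\delta x_t^\star\|^2+\sup\|\Psi_{xu}\|\,\|\delta x_t^\star\|\|\delta u_t^\star\|.
\]
Since $\Psi_{xx}=f_{xx}+\sum_j\partial_{xx}G_j\,u_j$ and $\Psi_{xu}$ is assembled from the $\partial_xG_j$, these coefficients are bounded by $\bar M$ and $\bar M'$ respectively. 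I take $|u_j|\le1$, or absorb the input scale into $M_{Gx}$, as Assumption~\ref{assum:regularity} implicitly does. This produces the bracket, provided $\|\delta x_t^\star\|\le\bar\varepsilon_x$ and $\|\delta u_t^\star\|\le\bar\varepsilon_u$.

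The input bound is direct: $\|\delta u_t^\star\|\le\|\delta u_t^{\mathrm{un}}\|+\|u_t^\star-u_t^{\mathrm{un}}\|\le\varepsilon_u+\delta_u^{\max}$, using Assumption~\ref{assum:small-excursion} and Lemma~\ref{lemma:control_deviation}.

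For the state, I write $\|\delta x_t^\star\|\le\|\delta x_t^{\mathrm{un}}\|+\|x_t^\star-x_t^{\mathrm{un}}\|$ and control $d_t:=x_t^\star-x_t^{\mathrm{un}}$ with a discrete Grönwall argument. Since $d_{t+1}=\Psi(x_t^\star,u_t^\star)-\Psi(x_t^{\mathrm{un}},u_t^{\mathrm{un}})$, the mean-value inequality with Assumption~\ref{assum:sensitivity} gives $\|d_{t+1}\|\le L\|d_t\|+\kappa\delta_u^{\max}$ with $d_k=0$. Hence $\|d_t\|\le\kappa\delta_u^{\max}\sum_{j=0}^{t-k-1}L^j\le\kappa\delta_u^{\max}\sum_{j=0}^{N-k-1}L^j$ for $t\le N-1$. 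Together with $\|\delta x_t^{\mathrm{un}}\|\le\varepsilon_x$ this gives $\|\delta x_t^\star\|\le\bar\varepsilon_x$ for all $t\le N-1$; these are the only indices at which $r_t$ enters. Monotonicity of the geometric sum in its upper limit justifies using the $N$-dependent $\bar\varepsilon_x$ for every stage. The closed forms then follow from summing the geometric series, with the sum equal to $N-k$ when the ratio is $1$.

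The main obstacle is the remainder step, specifically matching the constants $\bar M,\bar M'$. Several things must be justified there. The $uu$ term vanishes by control-affinity. The mixed term must carry coefficient $1$, not $\tfrac12$, after symmetrizing the cross terms. The $x_t^{\mathrm{lin}}$ recursion must telescope so that $r_t$ is a genuine Taylor remainder about the fixed point $(x_k,u_k^{\mathrm{nom}})$. Finally, all intermediate points must stay in $\mathcal X\times\mathcal U$, which is where convexity and the tube assumption are used.
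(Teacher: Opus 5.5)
Your proposal is correct and follows essentially the same route as the paper's proof. Both arguments use a discrete Gr\"onwall bound on $x_t^\star-x_t^{\mathrm{un}}$ to obtain the envelopes $\bar\varepsilon_x,\bar\varepsilon_u$, bound the control-affine second-order remainder $r_t$, and unroll the exact recursion $e_{t+1}=Ae_t+r_t$ from $e_k=0$. Your explicit derivation of $r_t$ as the Taylor remainder about $(x_k,u_k^{\mathrm{nom}})$, your restriction to the indices $t\le N-1$, and your caveat that $\bar M=M_f+mM_{Gx}$ tacitly requires $|u_j|\le 1$ (or a rescaled $M_{Gx}$) are all points the paper leaves implicit.
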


\begin{proof}
Define the clipping-induced trajectory deviation
\[
\hat e_t:=x_t^\star-x_t^{\mathrm{un}},
\qquad
\hat e_k=0.
\]
Since both trajectories remain in $\mathcal X\times\mathcal U$ by
Assumption~\ref{assum:regularity}, the line segment connecting
$(x_t^{\mathrm{un}},u_t^{\mathrm{un}})$ and
$(x_t^\star,u_t^\star)$ lies in $\mathcal X\times\mathcal U$.
Parameterize this segment by
\[
x_s=x_t^{\mathrm{un}}+s\hat e_t,
\qquad
u_s=u_t^{\mathrm{un}}
+s(u_t^\star-u_t^{\mathrm{un}}),
\qquad s\in[0,1].
\]
By the fundamental theorem of calculus,
\[
\begin{aligned}
\hat e_{t+1}
&=
\Psi(x_t^\star,u_t^\star)
-\Psi(x_t^{\mathrm{un}},u_t^{\mathrm{un}})\\
&=
\int_0^1
\left[
\partial_x\Psi(x_s,u_s)\hat e_t
+
\partial_u\Psi(x_s,u_s)
(u_t^\star-u_t^{\mathrm{un}})
\right]ds.
\end{aligned}
\]
Therefore, using
$\|\partial_x\Psi\|\le L$,
$\|\partial_u\Psi\|\le\kappa$, and
$\|u_t^\star-u_t^{\mathrm{un}}\|\le\delta_u^{\max}$,
\[
\|\hat e_{t+1}\|
\le
L\|\hat e_t\|
+
\kappa\delta_u^{\max}.
\]
Since $\hat e_k=0$, iteration yields
$
\|\hat e_t\|
\le
\kappa\delta_u^{\max}
\sum_{j=0}^{t-k-1}L^j.
$
Hence, for all $t\le N$,
\[
\|\hat e_t\|
\le
\kappa\delta_u^{\max}
\sum_{j=0}^{N-k-1}L^j.
\]

By Assumption~\ref{assum:small-excursion},
$
\|\delta x_t^{\mathrm{un}}\|\le\varepsilon_x.
$
Thus,
\[
\|\delta x_t^\star\|
\le
\|\delta x_t^{\mathrm{un}}\|
+
\|\hat e_t\|
\le
\bar\varepsilon_x.
\]
Likewise,
\[
\|\delta u_t^\star\|
\le
\|\delta u_t^{\mathrm{un}}\|
+
\|u_t^\star-u_t^{\mathrm{un}}\|
\le
\bar\varepsilon_u.
\]

We now expand the nonlinear dynamics about the fixed linearization point
$(x_k,u_k^{\mathrm{nom}})$. Since
$
\Psi(x,u)=f(x)+G(x)u
$
is affine in $u$, its second derivative with respect to $u$ vanishes.
Consequently, the second-order Taylor remainder contains only
state-state and state-input terms. By the Hessian bounds in
Assumption~\ref{assum:regularity},
\[
\|r_t\|
\le
\frac12\bar M\|\delta x_t^\star\|^2
+
\bar M'\|\delta x_t^\star\|\|\delta u_t^\star\|
\le
\frac12\bar M\bar\varepsilon_x^2
+
\bar M'\bar\varepsilon_x\bar\varepsilon_u.
\]
Define
$
\eta
:=
\frac12\bar M\bar\varepsilon_x^2
+
\bar M'\bar\varepsilon_x\bar\varepsilon_u.
$

Because the linear prediction uses the same applied input
$\delta u_t^\star$ as the nonlinear trajectory, subtraction gives the
exact error recursion
$
e_{t+1}=Ae_t+r_t$,
$
e_k=0.
$

Therefore,
$
e_N
=
\sum_{t=k}^{N-1}A^{N-1-t}r_t,
$
and hence
\[
\begin{aligned}
\|e_N\|
&\le
\sum_{t=k}^{N-1}
\|A\|^{N-1-t}\|r_t\|\\
&\le
\eta
\sum_{t=k}^{N-1}\alpha^{N-1-t}
=
\eta
\sum_{j=0}^{N-k-1}\alpha^j.
\end{aligned}
\]
Substituting the definition of $\eta$ proves
\eqref{eq:err_clip_final}.
\end{proof}

\begin{corollary}[Two-step specialization of the finite-horizon error bound]
\label{cor:first_nontrivial}
Under the conditions of Theorem~\ref{thm:err_with_clip}, consider
$N=k+2$. Then the linearization error after two prediction steps satisfies
\begin{equation}
\begin{aligned}
\|e_{k+2}\|
&\leq
\frac12\bar M
\left(
\varepsilon_x+\kappa\delta_u^{\max,(0)}
\right)^2 \\
&\quad+
\bar M'
\left(
\varepsilon_x+\kappa\delta_u^{\max,(0)}
\right)
\left(
\varepsilon_u+\delta_u^{\max}
\right).
\end{aligned}
\label{eq:first_nontrivial_bound}
\end{equation}

In particular, since $\Psi$ is affine in $u$ and
$\delta x_k^\star=0$, the first-step Taylor remainder satisfies
$r_k=0$, and hence $e_{k+1}=0$. Thus, the first nonzero
linearization error can occur at $N=k+2$.
\end{corollary}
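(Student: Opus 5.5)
The plan is to specialize the proof of Theorem~\ref{thm:err_with_clip} to $N=k+2$ and use one structural feature: the first-step Taylor remainder vanishes. I would follow the theorem's error recursion $e_{t+1}=Ae_t+r_t$ with $e_k=0$, rather than just substituting $N=k+2$ into \eqref{eq:err_clip_final}. Direct substitution would give the factor $(1+\alpha)$ and the term $\kappa\delta_u^{\max}(1+L)$, which is weaker than the stated bound.

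\textbf{Step 1: $r_k=0$ and $e_{k+1}=0$.} At $t=k$ we have $\delta x_k^\star=0$. Write
\[
\Psi(x_k,u_k^\star)=\Psi(x_k,u_k^{\mathrm{nom}})+G(x_k)\,\delta u_k^\star .
\]
This identity is exact because $\Psi$ is affine in $u$. Since $B=G(x_k)$, it coincides with the linear prediction $x_k+c_k+A\cdot 0+B\delta u_k^\star$. Hence $r_k=0$ and $e_{k+1}=Ae_k+r_k=0$. This also proves the final assertion of the corollary.

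\textbf{Step 2: reduce to one remainder.} From the recursion, $e_{k+2}=Ae_{k+1}+r_{k+1}=r_{k+1}$. So the geometric factor $\sum_j\alpha^j$ collapses to $1$, and it suffices to bound $\|r_{k+1}\|$.

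\textbf{Step 3: bound the remainder at $t=k+1$.} The theorem's remainder estimate gives
\[
\|r_{k+1}\|\le \tfrac12\bar M\|\delta x_{k+1}^\star\|^2+\bar M'\|\delta x_{k+1}^\star\|\|\delta u_{k+1}^\star\| .
\]
For the state factor, use $\|\delta x_{k+1}^\star\|\le\|\delta x_{k+1}^{\mathrm{un}}\|+\|\hat e_{k+1}\|$. The clipping recursion $\|\hat e_{t+1}\|\le L\|\hat e_t\|+\kappa\|u_t^\star-u_t^{\mathrm{un}}\|$, with $\hat e_k=0$, gives
\[
\|\hat e_{k+1}\|\le\kappa\|u_k^\star-u_k^{\mathrm{un}}\|=:\kappa\delta_u^{\max,(0)}\le\kappa\delta_u^{\max}.
\]
Here I interpret $\delta_u^{\max,(0)}$ as the first-block clipping deviation, which Lemma~\ref{lemma:control_deviation} bounds by $\delta_u^{\max}$. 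The geometric sum in $L$ collapses to its $j=0$ term. Thus $\|\delta x_{k+1}^\star\|\le\varepsilon_x+\kappa\delta_u^{\max,(0)}$. For the input factor, $\|\delta u_{k+1}^\star\|\le\varepsilon_u+\delta_u^{\max}$, by Assumption~\ref{assum:small-excursion} and Lemma~\ref{lemma:control_deviation}. Substituting both bounds gives \eqref{eq:first_nontrivial_bound}.

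The main obstacle is bookkeeping rather than analysis. First, the evaluation indices must be consistent: the remainder at step $k+1$ uses the state at $k+1$, which carries only one step of clipping propagation, while the input at $k+1$ carries the full deviation bound. Second, the notation $\delta_u^{\max,(0)}$ must be justified as the first-stage deviation. If it were meant to equal $\delta_u^{\max}$, the bound still holds with the same proof.
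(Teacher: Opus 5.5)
Your proposal is correct and follows essentially the same route as the paper. Both arguments use affinity of $\Psi$ in $u$ together with $\delta x_k^\star=0$ to get $r_k=0$, hence $e_{k+1}=0$ and $e_{k+2}=r_{k+1}$. Both then bound $r_{k+1}$ with a one-step Gr\"onwall estimate $\|\hat e_{k+1}\|\le\kappa\delta_u^{\max,(0)}$ for the state factor and the horizon-wide $\delta_u^{\max}$ for the second input block. You simply spell out the steps the paper leaves terse, such as the exact identity $x_{k+1}^\star=x_{k+1}^{\mathrm{lin}}$.
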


\begin{proof}
At $t=k$, $\delta x_k^\star=0$. Since $\Psi$ is affine in $u$, the
second-order Taylor remainder satisfies $r_k=0$, and therefore
\[
e_{k+1}=Ae_k+r_k=0.
\]
Hence, the error at $k+2$ is generated only by the Taylor remainder
at the second step. Applying the local remainder bound at
$t=k+1$ and using the one-step state and input deviations gives \eqref{eq:first_nontrivial_bound}, where
$\delta_u^{\max,(0)}$ denotes the clipping magnitude of the first
input block, with $\delta_u^{\max,(0)}\leq\delta_u^{\max}$.
\end{proof}

\begin{remark}[Corollary~\ref{cor:first_nontrivial} sharpens Theorem~\ref{thm:err_with_clip} at $N=k+2$]
\label{rem:first_nontrivial_scope}
Corollary~\ref{cor:first_nontrivial} targets the same quantity, $e_{k+2}$,
that Theorem~\ref{thm:err_with_clip} bounds at $N=k+2$, but exploits the
specific initialization $e_k=0$ and the fact that $r_k=0$. Consequently,
the Gr\"onwall recursion is needed only once, from $k$ to $k+1$, rather
than over the entire horizon. This removes the $L$-dependent factor
$1+L$ from the state envelope $\bar\varepsilon_x$ appearing under direct
substitution of $N=k+2$ into \eqref{eq:err_clip_final}: only the first
input block contributes to $\hat e_{k+1}$, which is captured by
$\delta_u^{\max,(0)}\leq\delta_u^{\max}$. The input envelope
$\bar\varepsilon_u$, by contrast, still bounds $\delta u_{k+1}^\star$ --
the \emph{second} input block -- which $\delta_u^{\max,(0)}$ does not
control; it therefore retains the horizon-wide quantity $\delta_u^{\max}$
from Lemma~\ref{lemma:control_deviation}. Thus,
\eqref{eq:first_nontrivial_bound} provides a bound that is sharper than
direct substitution of $N=k+2$ into Theorem~\ref{thm:err_with_clip} in its
state-dependent term, while matching it in its input-dependent term.
\end{remark}

\begin{remark}[Implications for Closed-Loop Stability]
\label{rem:stability_scope}
This bound characterizes local prediction accuracy rather than
closed-loop stability or asymptotic convergence. Specifically,
\eqref{eq:first_nontrivial_bound} quantifies the two-step deviation of
the actual nonlinear state from the linearized model prediction under
the applied input, without implying state boundedness or target
tracking under $\mathrm{D}^2\mathrm{OC}$. Thus,
Corollary~\ref{cor:first_nontrivial} serves as the foundational
disturbance bound required for such analysis rather than a standalone
stability proof.
\end{remark}

\section{Simulation}\label{sec:sim}

We evaluate the proposed controller on two control-affine nonlinear multi-agent
systems: a unicycle team and a quadrotor team. The proposed SCP procedure
relinearizes the dynamics at each decision step and solves the resulting
strictly convex local QP, as characterized in
Theorem~\ref{thm:lin_qp}. We compare its coverage performance and
computational cost with a nonlinear NMPC baseline under the same objective,
horizon, and input constraints. The simulation also evaluates the
linearization error bound established in Theorem~\ref{thm:err_with_clip}.

\subsection{Unicycle Model}\label{sec:sim_uni}

Five agents with state $x=[p_x,p_y,\theta]^\top$ and input
$u=[v,\omega]^\top$ cover a $100\times100$ m domain. The sampling time
and prediction horizon are set to $T=0.1$ s and $H=20$, respectively.
Only the first input of each horizon is applied before relinearization. The
reference distribution $\nu$ contains $500$ samples drawn from a
five-component Gaussian mixture, as shown in Fig.~\ref{fig:uni_traj}.
The initial positions of the five unicycle agents are indicated by cross
marks. To evaluate the proposed method, ten simulation runs are performed,
with the reference sample points independently redrawn for each of the ten
seeds. The sample weights are initially assigned uniformly as
$\frac{1}{500}$ and then decreased according to the rule in
\cite{seo2025smcs} as the agents cover the domain. The simulation is
terminated when the sum of the remaining sample weights falls below one
percent of its initial value.

\begin{figure}[!h]
\centering
\includegraphics[width=\columnwidth]{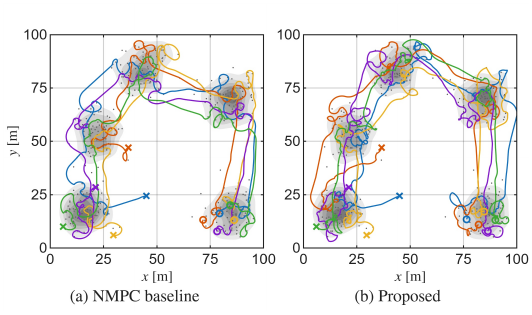}
\caption{Five-unicycle team trajectories: (a) NMPC
baseline and (b) proposed method. Reference samples are shaded by weight;
crosses and circles denote initial and final positions.}
\label{fig:uni_traj}
\end{figure}

Fig. \ref{fig:uni_traj} shows that both the NMPC and SCP controllers
effectively cover the domain according to the prescribed reference density.
The Wasserstein distance in Fig. \ref{fig:uni_perf}(a), which quantifies
the discrepancy between the time-averaged agent distribution and the
reference distribution, further illustrates the density-driven coverage
performance. The mean $\mathcal{W}_2(\rho_k,\nu)$ over the final ten
evaluations is $5.28$ for the proposed method and $5.23$ for NMPC,
with no significant difference in variance observed. These results indicate
that the linearized update based on Theorem~\ref{thm:lin_qp} preserves
coverage performance comparable to the nonlinear baseline in this
simulation.

\begin{figure}[!h]
\centering
\includegraphics[width=\columnwidth]{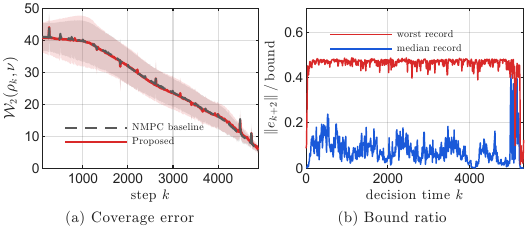}
\caption{Unicycle team over ten seeds: (a) Wasserstein distance between the
time-averaged agent distribution $\rho_k$ and reference density $\nu$ and
(b) Ratio of $||e_{k+2}||$ to the proposed bound.}
\label{fig:uni_perf}
\end{figure}

Fig. \ref{fig:uni_perf}(b) evaluates the error bound in
Theorem\ref{thm:err_with_clip} using \eqref{eq:first_nontrivial_bound}.
Since only the first input of each horizon is applied before relinearization,
we evaluate the first nonzero Taylor-remainder case at $N=k+2$, corresponding
to Corollary~\ref{cor:first_nontrivial}. Across the ten seeds, the worst-case
and median values are plotted over the simulation steps. The ratio
$|e_{k+2}|/\mathrm{bound}$ remains below one for all evaluated steps,
consistent with the bound in Corollary~\ref{cor:first_nontrivial}.

At each solve, $\varepsilon_x$ and $\varepsilon_u$ are instantiated as the
largest state and input excursions, respectively, of the unconstrained plan
over $t=k,\dots,k+H-1$. Thus, Assumption~\ref{assum:small-excursion} is
verified on the generated trajectory rather than imposed a priori. The
constants $\kappa$, $L$, $\bar M$, and $\bar M'$ are instantiated
analogously. Rather than fixing them a priori as uniform bounds over a
pre-specified $\mathcal X\times\mathcal U$, we evaluate $\kappa$ and $L$
using the maximum Jacobian norms of $\Psi$ sampled along the realized pairs
$(x_t^\star,u_t^\star)$ and $(x_t^{\mathrm{un}},u_t^{\mathrm{un}})$ at each
decision time $k$. The quantities $\bar M$ and $\bar M'$ are obtained from
finite-difference estimates of the sensitivity of the same Jacobians with
respect to $x$ over the corresponding samples. Accordingly,
Assumptions~\ref{assum:regularity} and~\ref{assum:sensitivity} are evaluated
on the realized operative tube at each $k$, rather than as fixed uniform
bounds over a pre-specified region.

To evaluate the computational advantage of the proposed method, the
computation time is measured for each run. The median computation time
across the ten runs is $1.61$ ms for the proposed SCP method and $9.20$ ms
for NMPC, corresponding to a factor of $5.7$ reduction. This result
demonstrates the computational benefit of replacing the nonlinear
optimization with the strictly convex QP characterized in
Theorem~\ref{thm:lin_qp}.

\subsection{Quadrotor Model}\label{sec:sim_quad}

We next evaluate the method on four quadrotors with a full twelve-state
rigid-body model in a $100\times100\times100$ m domain. The input consists
of collective thrust and body torques. The sampling time and prediction
horizon are set to $T=0.05$ s and $H=12$, respectively. The reference
distribution contains $600$ sample points drawn from a four-component
Gaussian mixture, as depicted in Fig.~\ref{fig:quad_traj}, with the samples
independently redrawn for each of the ten seeds. The initial positions of
the four quadrotors are indicated by cross marks.

Although the density-driven coverage trajectories of the NMPC and SCP
methods in Fig. \ref{fig:quad_traj} differ, the quantitative measure in
Fig. \ref{fig:quad_perf}(a) shows that both controllers achieve essentially
the same coverage level. The averages over the final ten evaluations are
$12.74$ for both methods, with a difference of only $0.003$. These
results indicate that the proposed SCP-based linearization can drive the
nonlinear multi-agent system toward the prescribed reference distribution
while maintaining comparable coverage performance to the NMPC baseline.

\begin{figure}[tb]
\centering
\includegraphics[width=\columnwidth]{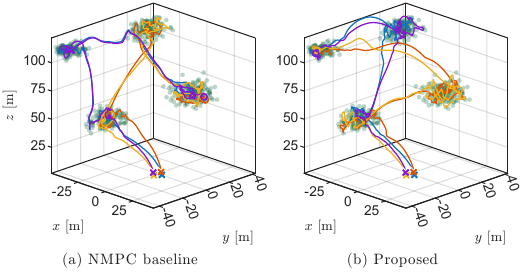}
\caption{Four-quadrotor team trajectories: (a) NMPC
baseline and (b) proposed method. Crosses and circles denote initial and
final positions.}
\label{fig:quad_traj}
\end{figure}

\begin{figure}[tb]
\centering
\includegraphics[width=\columnwidth]{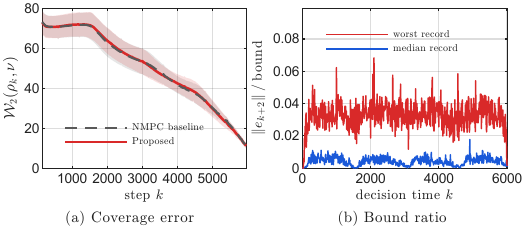}
\caption{Quadrotor team over ten seeds: (a) Wasserstein distance between
$\rho_k$ and $\nu$, and (b) Ratio of $||e_{k+2}||$ to the proposed bound.}
\label{fig:quad_perf}
\end{figure}

Fig. \ref{fig:quad_perf}(b) further presents the worst-case and median
values of the bound ratio across the ten simulation runs, supporting the
error-bound analysis in Theorem\ref{thm:err_with_clip}. The parameters
$\kappa$, $L$, $\bar M$, and $\bar M'$ are again instantiated locally at
each decision time using the same procedure as for the unicycle model.

Finally, the computational advantage becomes more pronounced for the
quadrotor case due to its higher state dimension. The median computation
time is $3.64$ ms for the proposed SCP method versus $162.2$ ms for NMPC,
corresponding to a factor of $45$, compared with a factor of $5.7$ for the
unicycle case.

\section{Conclusion}
\label{sec:conclusion}

This paper extended D$^2$OC to discrete-time control-affine nonlinear multi-agent systems using sequential convex programming. The proposed approach preserves the Wasserstein-based coverage objective and barycentric structure while reducing each local update to a strictly convex QP with input constraints. We further quantified the deviation between the constrained and unconstrained local QP solutions and incorporated this control deviation into a finite-horizon bound on the error between the nonlinear trajectory and its linearized prediction. Simulations with unicycle and quadrotor teams showed that the proposed method achieves coverage performance comparable to NMPC while substantially reducing computation time. These results demonstrate the potential of the proposed formulation for computationally efficient nonlinear coverage control. Future work will focus on closed-loop stability, recursive feasibility, and extensions to more general nonlinear constraints.

\bibliographystyle{ieeetr}
\bibliography{references}

\end{document}